\newtheorem{theorem}{Theorem}
\begin{document}
\begin{CJK*}{UTF8}{}

%tentative title below
%\title{Secure federated learning via gradient hiding and quantum communication}
\title{Privacy-preserving quantum federated learning via gradient hiding}
%\title{Gradient-hidden quantum federated learning }

\author{Changhao Li}
\email{changhao.li@jpmchase.com}
\affiliation{Global Technology Applied Research, JPMorgan Chase, New York, NY 10017 USA}

\author{Niraj Kumar}
\affiliation{Global Technology Applied Research, JPMorgan Chase, New York, NY 10017 USA}

\author{Zhixin Song}
\affiliation{Global Technology Applied Research, JPMorgan Chase, New York, NY 10017 USA}

\author{Shouvanik Chakrabarti}
\affiliation{Global Technology Applied Research, JPMorgan Chase, New York, NY 10017 USA}

\author{Marco Pistoia}
\affiliation{Global Technology Applied Research, JPMorgan Chase, New York, NY 10017 USA}

\begin{abstract}
Distributed quantum computing, particularly distributed quantum machine learning, has gained substantial prominence for its capacity to harness the collective power of distributed quantum resources, transcending the limitations of individual quantum nodes. Meanwhile, the critical concern of privacy within distributed computing protocols remains a significant challenge, particularly in standard classical federated learning (FL) scenarios where data of participating clients is susceptible to leakage via gradient inversion attacks by the server. This paper presents innovative quantum protocols with quantum communication designed to address the FL problem, strengthen privacy measures, and optimize communication efficiency. In contrast to previous works that leverage expressive variational quantum circuits or differential privacy techniques, we consider gradient information concealment using quantum states and propose two distinct FL protocols, one based on private inner-product estimation and the other on incremental learning. These protocols offer substantial advancements in privacy preservation with low communication resources, forging a path toward efficient quantum communication-assisted FL protocols and contributing to the development of secure distributed quantum machine learning, thus addressing critical privacy concerns in the quantum computing era.
\end{abstract}

\maketitle
\end{CJK*}

\section{Introduction}

Quantum computing has experienced rapid advancements in recent years, and within this dynamic landscape, distributed quantum computing including quantum machine learning (QML)~\cite{Cuomo2020DQC,Caleffi2022distributed,Beals2013,Cacciapuoti2020,montanaro2023quantum,gilboa2023exponential,Li2023blindQML,PhysRevLett.130.150602, kumar2023expressive}, has garnered considerable attention due to its remarkable capability to harness the collective power of distributed quantum resources, surpassing the limitations of individual quantum nodes. 
Distributed quantum computation usually involves generating and transmitting quantum states across multiple nodes leveraging the advancements in quantum communication technologies~\cite{QKD_RevModPhys.92.025002}.
Remarkably, distributed quantum computing protocols offer a ray of hope in addressing privacy concerns in the presence of adversaries~\cite{QKD_RevModPhys.92.025002,Bennett2014,broadbent2009universal,Fitzsimons2017npjQI,polacchi2023multiclient}, while traditional classical methods have struggled to ensure the confidentiality of sensitive information during distributed processes.
These adversaries not only involve third-party attacks that can be tackled with well-celebrated quantum communication technologies such as quantum key distribution~\cite{QKD_RevModPhys.92.025002,Bennett2014}, but also include privacy concerns with untrusted computing nodes \cite{broadbent2009universal,Fitzsimons2017npjQI}.

A critical example of this vulnerability indeed lies in classical federated learning (FL)~\cite{mcmahan2017communication, yang2019federated}, where multiple clients collaboratively train a machine learning model to optimize a given task while keeping their training data distributed without being moved to a single server or data center. 
A central server is assigned the responsibility of aggregating the client model updates, typically the model cost function gradients generated by the clients using their local data. However, this opens up a possibility of leaking client's sensitive data to the server using gradient inversion attacks~\cite{zhao2020idlg, eloul2022enhancing, mothukuri2021survey, Zhu19, geiping2020inverting}.
While techniques employing homomorphic encryption or differential privacy~\cite{aono2017privacy,huang2021evaluating} have been introduced to tackle the problem, they usually demand additional computational and communication overhead or come at the expense of reduced model accuracy.
To this end, quantum technologies could provide a natural embedding of privacy. 
To counteract the gradient inversion attack, one recent proposal~\cite{kumar2023expressive} replaced the classical neural network in the FL model with variational quantum circuits built using expressive quantum feature maps such that the problem of a successful attack is reduced to solving high-degree multivariate Chebyshev equations. Other quantum-based proposals include adding a certain level of noise to the gradient values to reduce the probability of a successful gradient inversion attack~\cite{Noise_PRR2021}, leveraging blind quantum computing~\cite{Weikang2021}, and others \cite{ren2023QFLreview,chen2021, Huang2022,chu2023cryptoqfl}. 
An alternative to the aforementioned methods is to encode the client's classical gradient values into quantum states and leverage quantum communication between the clients and server to transmit the states. This provides opportunities to hide the gradient values of individual clients from the server while allowing the server to perform the model aggregation using appropriate quantum operations on their end. In this case, the transmitted quantum states offer an inherent advantage in terms of privacy even without additional privacy mechanisms, as the classical information can be encoded in logarithmic number of qubits and using Holevo's bound, the server could extract at most logarithmic number bits of classical information during each round of communication~\cite{gilboa2023exponential}. Moreover, we remark that the approach can be naturally integrated with quantum cryptographic techniques~\cite{QCryptoRevModPhys.74.145,QKD_RevModPhys.92.025002} to become robust against third-party attacks. 

In this work, we introduce protocols for the above approach, aiming to advance the capability of distributed quantum computing with quantum communication in the context of FL. Specifically, we propose two types of protocols: one based on \emph{private inner-product estimation} to perform model aggregation, and the other based on the concept of \emph{incremental learning} to encode the model aggregated sum in the phase of the quantum state. 
For the former, we transform the secure model aggregation task into a correlation estimation problem and generalize the recently-developed blind quantum bipartite correlator (BQBC) algorithm~\cite{Li2023blindQML} into multi-party scenarios. For $m$ clients with $d$ model parameters to be updated, the protocol involves a quantum communication cost of $\Tilde{\mathcal O}(md/\epsilon)$ where $\epsilon$ is the standard model update error, and it is quadratically better in $m$ compared to the analogous method based on classical secret sharing~\cite{bonawitz2016practical}. 
For the second type of protocol, similar to incremental learning, clients perform multi-party computation sequentially or simultaneously without having the server involved until the end of the protocol at which the server extracts the aggregated gradient information. For one of our proposed protocols within the framework of incremental learning, the secure multi-party summation algorithm achieves a similar quantum communication cost as the BQBC with the complexity being $\Tilde{\mathcal O}(m d/\epsilon)$.

These protocols are designed not only to bolster privacy but also to have an evaluation on the quantum communication costs. Through the application of quantum algorithms, this work aspires to unlock novel strategies that are capable of safeguarding sensitive information within the realm of distributed quantum computing while optimizing communication efficiency.
Furthermore, it is noteworthy that the suggested protocols can seamlessly integrate with quantum key distribution protocols, thereby ensuring information-theoretic security against external eavesdropper attacks.
Our work sheds light on designing efficient quantum communication-assisted federated learning algorithms and paves the way for secure distributed quantum machine learning protocols.

%\CL{cite recent google paper on communication advantage of quantum protocols. }

%\begin{minipage}
\begin{table*}[t!]
\centering
\caption{Privacy and communication complexity of proposed gradient-hidden quantum federated learning protocols. \label{table:alg_summary}}
\begin{tabular}{   m{4.6cm}| m{3.5cm} | m{4.2cm} |m{5.5cm} } 
  \hline \hline
  Protocol & Privacy mechanism  & Communication complexity~\footnote{\mbox{CC: Classical communication complexity. QC: Quantum communication complexity}} & Additional requirement \\ 
  \hline
   baseline (classical)     & classical secret sharing  & CC: $\mathcal O((m+m^2)d)$& classical communication among clients \\ 
  \hline
  inner product estimation with classical secret sharing  & classical secret sharing, amplitude encoding  & CC: $\mathcal O(m^2d)$ 
   
   QC: $\mathcal O(\frac{d\log m}{\epsilon^2})$ & classical communication among clients, quantum communication among clients\\ 
  \hline
 blind QBC algorithm & quantum encoding, $\quad$ random phase padding &  QC: $\mathcal O (\frac{md}{\epsilon}\log( m \log(\frac{m}{\epsilon})))$\footnote{\mbox{Additional classical  communication complexity $\mathcal O(m)$ when random phase padding is used.}} & quantum communication among clients\\ 
  \hline
   GHZ-based phase encoding & phase accumulation &  QC: $\mathcal O(\frac{md}{\epsilon^2} )$ & global entanglement\\ 
  \hline
    multiparty quantum summation & phase accumulation &   QC: $\mathcal O(\frac{md}{\epsilon} \log\frac{m}{\epsilon}) $ & quantum communication among clients\\
  \hline
  \hline
\end{tabular}
\end{table*}
%\end{minipage}

\section{Problem Statement}

\subsection{Federated Learning setup}

We present the settings of the quantum communication-based federated learning scheme involving $m$ clients and a central server. Consider the setup with each client $i \in [m]$  having $N_i$ samples of the form,
$$\mathbf{X}^{(i)}, Y^{(i)} : \{(\mathbf{x}^{(i)}_j, y^{(i)}_j)\}_{j=1}^{N_i}, \hspace{2mm} i \in [m]$$
such that the total number of samples across all the clients is $N = \sum_{i\in [m]} N_i$. Here each $\mathbf{x}_j^{(i)} \in \mathbb{R}^n$ and $y_j^{(i)} \in \mathcal{C}$ for finite set of output classes.

The aim is to learn a single, global statistical model such that the client data is processed and stored locally, with only the intermediate model updates being communicated periodically with a central server. In particular, the goal is typically to minimize a central objective cost function,
\begin{equation}
   \text{min}_{\boldsymbol{\theta}} \left[\boldsymbol{\mathcal{L}}(\boldsymbol{\theta}) = \sum_{i=1}^{m} w_i\boldsymbol{\mathcal{L}}_i(\boldsymbol{\theta}) \right]
\end{equation}
where $\boldsymbol{\theta} = \{\theta_1, \cdots, \theta_d\} \in \mathbb{R}^d$ are the set of $d$ trainable parameters of the FL model. The user-defined term $w_i \geq 0$ determines the relative impact of each client in the global minimization procedure with the most natural setting being $w_i = \frac{N_i}{N}$. That is, here the weight $w_i$ depends on the local data size of individual clients and is known to both the server and clients. 

In the standard federated learning setup, at the $t$-th iteration, the clients each receive the parameter values $\boldsymbol{\theta}^t \in \mathbb{R}^d$ from the server and their task is to compute the gradients with respect to $\boldsymbol{\theta}^t$ and send it back to the server. Here the superscript denotes the iteration step. Upon performing a single batch training, they compute the $d$ gradient updates $\nabla\boldsymbol{\mathcal{L}}_i$ and share it with the server. The server's task is then to perform the gradient aggregation within a standard error bound $\epsilon$ to update the next set of parameters $\boldsymbol{\theta}^{t+1}$ using the rule,
\begin{equation}\label{eq:model_update_general}
   \boldsymbol{\theta}^{t+1} = \boldsymbol{\theta}^{t} - \alpha \sum_{i=1}^{m}w_i \nabla \boldsymbol{\mathcal{L}}_i(\boldsymbol{\theta^t}),
\end{equation}
where for the rest of the work, we assume the relative impact $w_i = \frac{N_i}{N}$, and $\alpha$ is the learning rate hyperparameter chosen by the server. The parameters $\boldsymbol{\theta}^{t+1}$ are then communicated back to the clients and the protocol repeats until a desired stopping criteria is reached.  

%For ease of notation we absorb the relative impact term $w_i$ in the gradient term and address the term $\nabla \boldsymbol{\mathcal{L}}_i(\boldsymbol{\theta^t})$ as the  weighted gradient information locally  held by client $i\in [m]$.
We denote that in many cases, one is interested in learning $\sum_{i=1}^{m}\nabla \boldsymbol{\mathcal{L}}_i(\boldsymbol{\theta^t}) \mod 2\pi$ as the model parameters can have a $2\pi$ period, particularly in quantum circuits. We point out that here the local circuit model of both the server and clients could be either classical or quantum, but they both have the capability of encoding their local data into quantum states. Further, we consider a quantum communication channel between the server and $m$ clients in order to facilitate the transmission of quantum states. 

\subsection{Data leakage in classical FL} \label{sec:data_leakage}

The existing classical FL setup was built on the premise that sharing gradients to the server would not leak the local data information to the server. However, this notion of privacy has been challenged by the wider community~\cite{mothukuri2021survey}. Specifically led by the work of~\cite{Zhu19} and followed up by~\cite{zhao2020idlg,GeipingBD020,Yin21,ZhuB21}, it's shown that it is possible for the \emph{honest-but-curious} server (who  strictly follows the protocol but is interested in learning clients' private data) to extract input data from model gradients. In fact, using the results of \cite{eloul2022enhancing}, we showcase in Appendix~\ref{app:gradient_inversion} how to easily invert the gradients generated from a fully connected neural network model to learn the data.

While classical techniques including homomorphic encryption~\cite{aono2017privacy} and secret sharing~\cite{bonawitz2016practical} have been employed to tackle the challenge, they usually impose a significant overhead in communication and computation cost, limiting their applications for federated learning tasks. On the other hand, randomization approach employing  differential privacy~\cite{huang2021evaluating}, while being simple to implement, usually leads to a reduced model accuracy and utility (see Appendix~\ref{app:classical_protection} for details).

In this work, we address the concern of data leakage originating from gradients that are 
generated by either a classical neural network based model or a variational quantum circuit based model~\cite{benedetti2019parameterized}.
The primary objective is to facilitate a secure global parameter update without divulging individual clients' gradient information $\nabla \boldsymbol{\mathcal{L}}_i(\boldsymbol{\theta^t})$ to the server, thereby mitigating the risk of gradient inversion attacks.
In order to hide the individual gradient information while still performing the model parameter update in Eq.~\ref{eq:model_update_general}, one can implement privacy in either multiplication between weights $w_i$ and local gradient $\nabla \boldsymbol{\mathcal{L}}_i(\boldsymbol{\theta^t})$
, or summation among weighted gradients. In what followings, we will show protocols along these two ways: secure inner product estimation or secure weighted gradient summation (in analogy with incremental learning).
Before diving into the details, we summarize the proposed protocols by listing the main privacy mechanism as well as quantum communication complexity and their requirements in Table~\ref{table:alg_summary}.

\section{Protocol I: secure inner product estimation}\label{sec:protocol_inner product}

In this section, we consider converting the model aggregation problem into task of distributed inner product estimation between server and clients where algorithms such as quantum bipartite correlator (QBC)~\cite{PhysRevLett.130.150602,Li2023blindQML} could be employed. 
From the federated learning parameter update rule Eq.~\ref{eq:model_update_general}, we note that for each parameter index, $j \in [d]$, the task for the server would be to perform multiplication between the weight $w_i$ and local gradient $\nabla \mathcal{L}_{i,j}(\boldsymbol{\theta})$, before summation of all weighted gradients to obtain $\theta_j^{t+1}$.

In the following, we start from a baseline approach where the secure inner product is performed with the assistance of classical secret sharing (CSS). Following it, we utilize the blind quantum bipartite correlator algorithm and propose a scheme for secure inner product estimation with quadratically fewer communication cost in $m$.

\subsection{Baseline: Classical secret sharing assisted inner-product estimation}\label{sec:baseline}

In this section, we start with a purely classical strategy to hide the gradients of the clients prior to sending the masked gradients to the server. We use this as a baseline to compare against the quantum gradient hiding strategies we develop over the next sections. The baseline strategy is built using the masking technique with one-time pads as introduced in Protocol 0 in Ref.~\cite{bonawitz2016practical}. For this protocol to succeed, we assume that each client is switched ``on" during the entirety of the protocol and further, has pairwise secure classical communication channels with each of the $m-1$ other clients. 

The protocol starts with each client $i$ sampling $m-1$ random values $s_{i,k} \in [0,R)$ for every other client indexed by $k$. Here $R$ is the chosen upper limit of the interval as agreed by all the clients. Similarly, all other clients generate the random values in $[0,R)$ for every other client. Next, clients $i$ and $k$ exchange $s_{i,k}$ and $s_{k,i}$ over their secure channel and compute the perturbations $p_{i,k} = s_{i,k} - s_{k,i} (\text{mod}~ R)$. We note that $p_{i,k} = -p_{k,i}$, Further, $p_{i,k} = 0$ when $i=k$. The clients repeat the above procedure a total of $d$ times (to mask each of the $d$ gradient values $\nabla \mathcal{L}_{i,j}(\boldsymbol{\theta})$). 

Next, for every parameter to be updated, each client sends masked gradient value to the server,
\begin{equation}
     y_{i} = \nabla \mathcal{L}_{i}(\boldsymbol{\theta}) + \frac{1}{w_i}\sum_{k=1}^m p_{i,k} (\text{mod}~ R).
    \label{eq:masked_grad}
\end{equation}
Note that we drop the parameter index $j$ hereafter for simplicity.
The task of the server is then to perform a weighted aggregation of the gradients in order to obtain the next set of parameter values. It can be trivially checked that an honest server always succeeds in performing the correct aggregation, i.e.,
\begin{equation}
\begin{split}
 \bar{y} &= \sum_{i=1}^m w_i y_{i} \\
    &= \sum_{i=1}^m w_i \nabla \mathcal{L}_{i}(\boldsymbol{\theta}) + \sum_{i,k}(s_{i,k} - s_{k,i}) (\text{mod}~ R)\\
    &= \sum_{i=1}^m w_i \nabla \mathcal{L}_{i}(\boldsymbol{\theta}).
\end{split}
\end{equation}
Further, privacy is guaranteed due to the use of one-time pad masking of gradients which guarantees information-theoretic security against malicious server.

The above scheme requires a total of $\frac{m(m-1)}{2}\times d \log(R)\approx \mathcal{O}(m^2 d)$ classical bits of communication between the clients and a further $\mathcal{O}(md)$ bits of communication between the clients and server to achieve secure aggregation.  Thus the total classical communication complexity required is,
\begin{equation}
    \mathcal C_{CSS}^C =  \mathcal O((m + m^2)d) \approx \mathcal{O}(m^2d).
\end{equation}

We remark that the above classical secret sharing based scheme could be augmented using quantum resources to provide a minor improvement in the total communication cost (Fig.~\ref{fig:BQBC_QFL}a). Specifically, after obtaining the masked gradients as in Eq.~\ref{eq:masked_grad}, the clients can collaboratively encode their masked gradients in an amplitude encoded quantum state,
\begin{equation}
    \ket{\phi_c} = \frac{1}{\mathcal N_c} \sum_{i=1}^m  y_{i}  \ket{i}, 
    \label{eq:grad_sup}
\end{equation}
where $\mathcal N_c$ is the normalization factor.
This state is then sent to the server which can recover the weighted aggregate sum by performing the SWAP test-based discrimination \cite{buhrman2001quantum} with their local state $ \ket{\phi_s} = \frac{1}{\mathcal N_s} \sum_{i=1}^m w_i \ket{i}$. Since the state in Eq.~\ref{eq:grad_sup} requires only $\mathcal{O}(\log(m))$ qubits, the amount of communication between the server and clients can be reduced to $\mathcal{O}(\log (m)/\epsilon^2)$, where $\epsilon$ is the error incurred in estimating the aggregated sum using the SWAP test. The total communication complexity with this scheme is,
\begin{equation}
    \mathcal C_{CSS}^Q =  \mathcal O\left(\left(\frac{\log m}{\epsilon^2} + m^2\right)d \right). 
\end{equation}

\begin{figure*}[t]
 \includegraphics[scale=0.53]{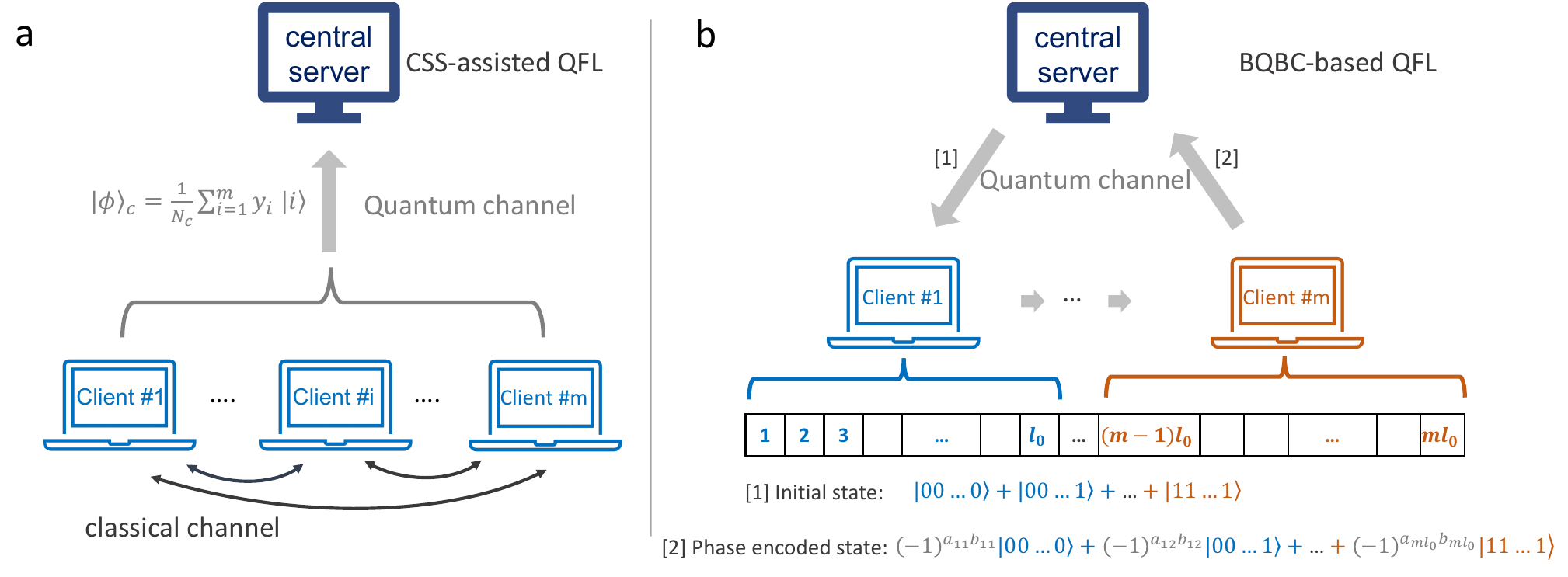}
\caption{Diagram of QFL protocols based on secure inner product estimation. \textbf{a.} CSS-assisted QFL protocol. The clients jointly prepare a state in which the amplitudes encode the masked gradients and then send it to the server. The gradient masking is achieved via classical secret sharing.
\textbf{b.} BQBC-based QFL protocol. We consider a central server with $m$ clients and there are quantum channels among them. During each round of communication, each client encodes their local gradient information in specific phases of the received state and then send back to the server. }
\label{fig:BQBC_QFL}
\end{figure*}

\subsection{Model aggregation with blind quantum bipartite correlator algorithm}\label{sec: BQBC_protocol}

To reduce the communication cost, in this section, we propose a method for model updating based on quantum bipartite correlator algorithm~\cite{PhysRevLett.130.150602,Li2023blindQML} that is designed to estimate inner product between remote vectors. 
The essential idea is a generalization of recently-proposed blind quantum bipartite correlator  algorithm~\cite{Li2023blindQML}: firstly, each client converts the gradient information into binary floating point numbers. Then, at each round of communication, the server passes the index qubit state that encodes weight information into each honest or honest-but-curious client and let them privately encode the gradient information into the phase of corresponding index qubits. Finally, the server receives back the index qubits and perform quantum counting algorithm to extract the desired aggregated gradient. 

We now proceed to the implementations details of the algorithm.
As mentioned, the goal is to have the sever performs the inner product estimation using the known weight information and gradient information that is only locally held by each client. 
For the $k$-th client ($k \in [m]$), both the weight $w_k$ and the gradient $\nabla\boldsymbol{\mathcal{L}}_k$ can be expanded as binary bitstrings both with size $l_k$: $\boldsymbol{a}_k$ and $\boldsymbol{b}_k$, such that $w_k \cdot \nabla\boldsymbol{\mathcal{L}}_k$ equals to the inner product $l_k \frac{1}{l_k}\sum_j a_{kj}b_{kj} = l_k \overline{a_k b_k}$.
One example of such expansions is to use the IEEE standard for floating-point arithmetic~\cite{IEEEstandard2019}, where we can have
\begin{equation}
    w_k = \sum_{i=0}^{l_k} 2^{u-i} a_{ki}, \nabla\boldsymbol{\mathcal{L}}_k = \sum_{i=0}^{l_k} 2^{v-i} b_{ki}.
\end{equation}
Here $u$ and $v$ are the highest digits of $w_k$ and  $\nabla\boldsymbol{\mathcal{L}}_k$, respectively, and are constants known to both server and clients.     We then get
\begin{equation}
    w_k \cdot \nabla\boldsymbol{\mathcal{L}}_k = \sum_{\lambda=0}^{2l_k} 2^{u+v-\lambda} \sum_{i=0}^{\lambda} a_{ki}b_{k(\lambda-i)}.
\end{equation}
In the following, we assume $l_k = l_0, \forall 1 \leq k \leq m$ for simplicity. Then the goal is to design a private inner product protocol to have the server evaluate $\sum_{k=1}^{m} \sum_{j}^{l_0} a_{kj} b_{kj}$. 
% \begin{equation}
%     \sum_{k=1}^{m} \boldsymbol{a}_k \cdot \boldsymbol{b}_k = \sum_{k=1}^{m} \sum_{j=1}^{l_0} a_{kj} b_{kj} .
% \end{equation}
% %=\sum_{i=1}^{m l_0} x_i y_i.

We thus consider the following protocol:
initially, the server prepares a quantum state with $\lceil \log (ml_0) \rceil$ index qubits $\frac{1}{\sqrt{2^{\lceil \log (ml_0) \rceil}}}\sum_{i=1}^{2^{\lceil \log (ml_0) \rceil}}\ket{i}$, and then applied controlled-gate to encode all $w_k$ information on a single qubit $o_a$. The final state is 
\begin{equation}
    \sum_{k=1}^m \sum_{i=1}^{l_0}\ket{k,i}\ket{a_{ki}}_{o_a},
\end{equation}
where the index $k$ denotes the $k$-th client and index $i$ is the index for bitstring with size $l_0$. We omit the normalization factor for above and following states in the protocol for simplicity.

Then, as shown in the diagram in Fig.~\ref{fig:BQBC_QFL}b, the server delivers the above $\lceil \log (ml_0) \rceil+1$ qubits to the first client. 
Note that a malicious server could prepare a state $\sum_{k=1}^m \sum_{i=1}^{l_0}c_{ki}\ket{k,i}\ket{a_{ki}}_{o_a}$  with non-uniform amplitude distribution of $c_{ki}$ to extract clients' information of interest. To detect such attacks, the first client would firstly decode the ancillary qubit $o_a$ (as the encoded weight information is known globally) and then measure the index qubits in $X$ basis. In the honest server case where $c_{ki}$ has a uniform distribution,  measurement outcome should be all +1.
 That is, if a malicious server tries to extract certain gradient information by increasing the amplitude of corresponding bitstrings, the index qubit state without the ancilla would not be $\ket{+}^{ \otimes \lceil \log (ml_0) \rceil }$.

After successful verification and re-encoding of the weight information in ancillary qubit $o_a$ , the first client encodes its local gradient information $\nabla\boldsymbol{\mathcal{L}}_1$ into the phase of the first part of index qubits, which leads to
\begin{equation}
    \sum_{i=1}^{l_0}(-1)^{a_{1i} b_{1i}}\ket{1,i}\ket{a_{1i}}_{o_a} +\sum_{k=2}^{m} \sum_{i=1}^{l_0} \ket{k,i}\ket{a_{ki}}_{o_a}.
\end{equation}
This could be done using CZ gate between qubit $o_a$ and a local qubit held by the first client that encodes $\nabla\boldsymbol{\mathcal{L}}_1$. 

The first client then passes the above state to the second client, who then encodes its local gradient information $\nabla\boldsymbol{\mathcal{L}}_2$ into the phase of the second part of index qubits. The resulting state is
\begin{equation}
    \begin{split}
        \sum_{i=1}^{l_0}(-1)^{a_{1i}b_{1i}}\ket{1,i}\ket{a_{1i}}_{o_a} + & \sum_{i=1}^{l_0}(-1)^{a_{2i} b_{2i}}\ket{2,i}\ket{a_{2i}}_{o_a} \\
        +&  \sum_{k=3}^{m} \sum_{i=1}^{l_0} \ket{k,i}\ket{a_{ki}}_{o_a} \\
    \end{split}
\end{equation}

The above process is repeated until all the clients have encoded their local gradient information in the phase, resulting to a state 
\begin{equation}
   \sum_{k=1}^{m} \sum_{i=1}^{l_0}  (-1)^{a_{ki} b_{ki}}\ket{k,i}\ket{a_{ki}}_{o_a}.
\end{equation}

Finally, the state is returned to the server by the last client. Then the server runs quantum counting algorithm~\cite{PhysRevLett.130.150602, Li2023blindQML} to evaluate $\frac{1}{m l_0}\sum_{k=1}^{m} \sum_{j=1}^{l_0} a_{kj} b_{kj} = \frac{1}{m}\sum_{k=1}^m  \overline{a_k b_k}$. 
In order to perform the estimation algorithm, $\mathcal O (\frac{1}{\epsilon})$ rounds of communication is needed where $\epsilon$ is the standard estimation error. We remark that quantum counting algorithm is based on Grover's search algorithm and is advantageous compared with SWAP-test based algorithms~\cite{PhysRevLett.124.060503} in terms of the error complexity.

We present the takeaway of this method here. Firstly, the privacy is encode in the index qubit states. When the server measures the index qubits, the probability of getting a specific index is simple $\frac{1}{m l_0}$, which is small when client number $m$ is large.  Moreover, the server could not amplify the amplitude of a specific index by preparing a uniformly distributed  superposition state, as the first client is capable of verifying it. Furthermore, even there are multi-round of communication and the server could perform collective attack, by increasing $l_0$ or adding a random pad on the phase, it's still hard for the server to get individual client's information~\cite{Li2023blindQML}. 
Note that here the privacy comes from the phase encoding, rather than summation of gradients as in incremental learning protocols.

The quantum communication complexity would be the total number of qubits transmitted in order to estimate $\frac{1}{m}\sum_{k=1}^m \overline{a_k b_k}$ in the protocol, which reads as
\begin{equation}
    \mathcal C_{BQBC} = \mathcal O (\frac{md\log(m l_0)}{\epsilon}) = \mathcal O (\frac{md}{\epsilon}\log( m \log(\frac{m}{\epsilon}))) .
\end{equation}
Again, here $m$ is the number of clients, and $l_0 $ is related with the precision of gradient $ml_0 = O(m\log(1/\epsilon_0)) = O( m \log(m/\epsilon))$ where $\epsilon_0$ is the inner product estimation error bound for single clients. This is better than classical secret sharing which has a total complexity in $O(m^2)$.
We note that in the absence of random phase padding, the technique here doesn't require classical communication at each round. The incorporation of random, one-time phase pads for privacy enhancement necessitates an additional classical communication cost of $\Tilde{\mathcal O}(m)$, as each client would need to send the padding information to the server at last.

\subsection{Redundant encoding}
The privacy of the protocol above could be further enhanced with redundant encoding of gradient data into binary bitstrings~\cite{Li2023blindQML}. In particular, we remark on the following theorem:

% definition of privacy here;
\begin{theorem}[Efficient redundant encoding]
\label{thm:redundant_encodinng}
In the BQBC-based QFL protocol,
given a fixed estimation error $\epsilon$, there exists a redundant encoding method with a redundant parameter $r$, such that the probability that server learns client's information decrease polynomially in $r$, which the communication complexity increases only poly-logarithmically in $r$.
% \begin{equation}
%    1
% \end{equation}
\end{theorem}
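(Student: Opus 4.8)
The plan is to exhibit, for every target precision $\epsilon$ and every redundancy level $r$, an explicit encoding that each client applies to its length-$l_0$ gradient bitstring before the phase-encoding step of Sec.~\ref{sec: BQBC_protocol}, and then to check two properties of the resulting protocol: the quantity estimated by the server's quantum-counting routine is unchanged, so the number of communication rounds stays $\mathcal{O}(1/\epsilon)$ and only the index-register width grows; and the server's chance of recovering any fixed piece of a fixed client's gradient decays as $1/\mathrm{poly}(r)$. Concretely, following the redundant-encoding idea of Ref.~\cite{Li2023blindQML}, client $k$ replaces the pair $(\boldsymbol{a}_k,\boldsymbol{b}_k)$ of length $l_0$ by a pair $(\tilde{\boldsymbol a}_k,\tilde{\boldsymbol b}_k)$ of length $L=\Theta(r\,l_0)$ produced by a public, per-round re-randomized spreading map obeying the single constraint $\frac{1}{L}\sum_{i=1}^{L}\tilde a_{ki}\tilde b_{ki}=\overline{a_k b_k}$; the server's index register is enlarged to $\lceil\log(mL)\rceil$ qubits and the ancilla preparation, $X$-basis verification, sequential CZ phase encoding, and final counting are all run verbatim on this larger register. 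The design point is that the spreading should be secret-sharing-like: the set of $\Omega(r)$ register locations that jointly carry one original bit $b_{kj}$ is placed at fresh random positions each round and assigned signs that combine to the intended phase but individually are uninformative, so that no single fixed unitary on the index register coherently collects them.

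Next, \emph{correctness and communication}. Since $\overline{\tilde a_k\tilde b_k}=\overline{a_k b_k}$ for every $k$, the state returned to the server still has ``marked fraction'' exactly $\frac1m\sum_k\overline{a_k b_k}$, so the quantum-counting analysis of Sec.~\ref{sec: BQBC_protocol} applies unchanged and $\mathcal{O}(1/\epsilon)$ rounds still suffice; likewise the first client's $X$-basis test still certifies that the server's preparation is the uniform superposition on the $\lceil\log(mL)\rceil$-qubit register, because an honest preparation decodes to $\ket{+}^{\otimes\lceil\log(mL)\rceil}$ and any amplitude concentration does not. Each round transmits $m+1$ registers of $\lceil\log(mL)\rceil+\mathcal{O}(1)$ qubits, and there are $d$ parameters, so the total cost becomes $\mathcal{O}\!\big(\tfrac{md}{\epsilon}\log(mL)\big)=\mathcal{O}\!\big(\tfrac{md}{\epsilon}(\log(ml_0)+\log r)\big)$; relative to $r=1$ this is an overhead factor $1+O(\log r/\log(ml_0))$, i.e.\ poly-logarithmic in $r$.

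Then, \emph{privacy}. Model the honest-but-curious server as applying, to each round's returned register and jointly across the $\mathcal{O}(1/\epsilon)$ rounds, any measurement whose output is a guess of some fixed bit-block of a chosen client $k^*$. The bound is assembled from three ingredients: the uniformity certified above together with Holevo's bound limits each round to $\mathcal{O}(\log(mL))$ classical bits about the joint client data; recovering the targeted block requires the server to effectively locate and coherently combine $\Omega(r)$ of the randomly placed carrier locations, so a union / coupon-collector estimate over the $\mathcal{O}(1/\epsilon)$ rounds bounds the probability of assembling enough of them by $O(r^{-c})$ for a constant $c>0$; and the per-round re-randomization of the placement plays the role of the random phase pad mentioned above, which, by the argument of Ref.~\cite{Li2023blindQML}, prevents a collective attack from doing better than the single-round bounds would suggest. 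Multiplying these through gives the claimed polynomial-in-$r$ decay of the server's advantage.

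The main obstacle is reconciling the two requirements on the encoding: it must keep the real-valued average — hence the counting target — exactly fixed so the round count does not grow, yet be ``spreading'' enough that the $\Omega(r)$ locations carrying a given bit cannot be coherently summed by the server. Naive $r$-fold repetition meets the first requirement but fails the second, since the symmetric superposition over the repeated locations recovers the full amplitude of that bit; randomizing the placement so the server does not know which locations to combine repairs this heuristically, but the delicate step is bounding the server's success under an \emph{arbitrary joint} measurement over all rounds, i.e.\ showing the collective attack cannot amortize the search for the hidden carrier locations. This is exactly where the random-phase-padding analysis of Ref.~\cite{Li2023blindQML} must be lifted to the multi-client, redundant-register setting, and carrying that adaptation through rigorously is the crux of the proof.
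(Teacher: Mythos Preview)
Your proposal takes a different route from the paper and has a real gap at its center. The paper's argument is an explicit construction: the server simply repeats each weight bit $r$ times, $a'_{ki,j}=a_{ki}$, while client $k$ hides $b_{ki}$ at a single uniformly random position $R_{ki}\in[r]$ and fills the other $r-1$ slots with a constant (all $0$ or all $1$, the choice being revealed by one classical bit at the end). This does \emph{not} preserve the marked fraction --- it scales it by $1/r$, and the paper explicitly takes $r/\epsilon$ rounds --- but it makes the privacy analysis a one-line sampling count: the probability that the server learns a specific $b_{ki}$ is $\frac{1}{rml_0}\cdot\frac{r}{\epsilon}\cdot\frac{1}{r}=\frac{1}{rml_0\epsilon}$, while the register width grows only from $\log(ml_0)$ to $\log(rml_0)$.

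Your scheme instead demands $\frac{1}{L}\sum_i\tilde a_{ki}\tilde b_{ki}=\overline{a_kb_k}$ exactly, so that the round count stays $\mathcal O(1/\epsilon)$. But you never construct such an encoding, and the constraints you list are in tension: with binary entries and with $\tilde{\boldsymbol a}_k$ necessarily known to the server (who must load it onto the ancilla), preserving the per-index mean forces $\sum_j\tilde b_{ki,j}=r\,b_{ki}$ on the support of $a_{ki}$ --- which is precisely the naive $r$-fold repetition you yourself flag as insecure. Your escape hatch of ``signs that combine to the intended phase but individually are uninformative'' leaves the $\{0,1\}$ phase-oracle model on which the quantum-counting step relies. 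You acknowledge this as ``the main obstacle'' and ``the crux of the proof,'' yet it is left unresolved; likewise the Holevo/coupon-collector privacy sketch is never turned into an actual $r^{-c}$ bound against joint measurements. As it stands the proposal is a program rather than a proof, whereas the paper's simpler explicit encoding together with its direct probability count is what actually establishes the theorem.
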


\begin{proof}
Following Ref.~\cite{Li2023blindQML}, we consider the following redundant encoding approach aimed at reducing the probability that a malicious server acquiring a specific $b_i$ information with $i$ being the pertinent index of interest. 
 we describe the following protocol where both the client $k$ and server encode their single bit local information $b_{ki}$ and $a_{ki}$ into bitstrings ${\left[b^{\prime}_{ki,1},b^{\prime}_{ki,2},\cdots,b^{\prime}_{ki,r}\right]}$ and ${\left[a^{\prime}_{ki,1},a^{\prime}_{ki,2},\cdots,a^{\prime}_{ki,r}\right]}$ with size $r$, where $r$ is an integer and $r>1$. The total amount of bits then increases from $ml_0$ to $rml_0$. For the weight information, we consider the following encoding rule
\begin{equation}\label{eq:rule_extending_bitstring_size_X}
    \begin{split}
        \boldsymbol{a}^{\prime}_{ki,j} = a_{ki}; \quad k=1,2,...,m; i = 1,2,...,l_0; j = 1,2, ..., r;
    \end{split}
\end{equation}
which is a simply copy the bit $a_{ki}$ for $r$ times. Here $k$ is the index for client and $i$ is the index of bitstring held by each client.

On the other hand, for $\boldsymbol{b}^{\prime}_k$, the $k$-th client can hide the information $b_{ki}$ randomly in one of the $r$ digits and let the other $r-1$ digits to be all zero or one. That is, for bit index $i$, the $k$-th client chooses either
\begin{equation}\label{eq:rule_extending_bitstring_size_y1}
    \begin{split}
        &\boldsymbol{b}^{\prime}_{ki,j} = \delta_{j, R_{ki}} \cdot b_{ki}; \\
         i = 1,2,...,l_0; &j = 1,2, ..., r, R_{ki}\in\{1,2,...,r\}. \\
    \end{split}
\end{equation}
or 
\begin{equation}\label{eq:rule_extending_bitstring_size_y2}
    \begin{split}
         &\boldsymbol{b}^{\prime}_{ki,j} = (1-\delta_{j, R_{ki}}) + \delta_{j, R_{ki}} \cdot b_{ki}; \\
         i = 1,2,...,l_0; &j = 1,2, ..., r, R_{ki}\in\{1,2,...,r\}. \\
    \end{split}
\end{equation}
where $R_{ki}$ is an random number $R_{ki} \in [r]$ and $\delta_{j, R_{ki}}$ is the Kronecker delta function. 

Then, according to the above rules, by running the QBC algorithm, for each $k$ the server would get 
\begin{equation}
    \frac{1}{rl_0}\sum_i^{l_0} a_{ki} b_{ki} \quad \text{or} \quad  \frac{1}{rl_0}\sum_i^{l_0} a_{ki} b_{ki} + \frac{r-1}{rl_0}\sum_i^{l_0} a_{ki}
\end{equation}
depending on whether the $k$-th client chooses encoding method Eq.~\ref{eq:rule_extending_bitstring_size_y1} or Eq.~\ref{eq:rule_extending_bitstring_size_y2}. The difference between the two extracted values is $\frac{r-1}{rl_0}\sum_i^{l_0} a_{ki}$ that the server would know.
Note that this choice could vary for different client $k$. 
At the end of the protocol, each client can send an one-bit message via classical channel to the server and let server knows which one was used, after which the server could extract $\sum_{k=1}^{m} \sum_{j=1}^{l_0} a_{kj} b_{kj}$.  This process yields a classical communication $\mathcal O(m)$.

We remark that at each communication round, the probability that the server samples a specific bit reduces from $\frac{1}{ml_0}$ to $\frac{1}{rml_0}$ with $r>1$. Even though that $r$-times more communication round will be needed to achieve the same error bound $\epsilon$ as in the original QBC case, the server would not know which digit encodes the correct $b_{ki}$ information as here $R_{ki}$s are random numbers. Therefore, the probability that the server successfully gets a specific bit $b_{ki}$ would be 
\begin{equation}
    P = \frac{1}{ rm l_0} \times \frac{r}{\epsilon} \times \frac{1}{r} = \frac{1}{rm l_0  \epsilon},
\end{equation}
where the second term $\frac{r}{\epsilon}$ is the total number of communication rounds and the third term is $\frac{1}{r}$ is due to the randomness in $R_{ki}$. 

 It's clear to see that a larger value of $r$ corresponds to a decreased probability for the server to successfully extract valuable information from the client through the attack strategy. The flexibility that the client can independently choose encoding method also protects the majority information of $\boldsymbol{b}$, i.e., the client may choose Eq.~\ref{eq:rule_extending_bitstring_size_y1} to encode data if the majority of $\boldsymbol{b}$ is $1$ to decrease the probability that $1$s are being detected.
% An advantage that the clients are able to choose different encoding method listed above is to hide their majority of the data. Since the key to decrease the probability of being detected by the server comes from the randomness of $J_i$, 
Nevertheless, the trade-off for employing this redundant encoding approach compared with the original one in Sec.~\ref{sec: BQBC_protocol} manifests as an augmented quantum communication complexity, as the transmitted qubit number goes from $\log (m l_0)$ to $\log (r m l_0)$ now.
To this end, with the above redundant encoding method, the communication complexity increases logarithmically in $r$, while the probability that server successfully gets a specific bit $b_{ki}$ decrease polynomially in $r$.     
\end{proof}

With Theorem~\ref{thm:redundant_encodinng}, we show that one can design a protocol such that the privacy goes polynomially better with a redundant encoding parameter while the communication cost only goes logarithmically or linearly with this parameter.

\section{Protocol 2: Incremental learning}

\begin{figure*}[t]
 \includegraphics[scale=0.56]{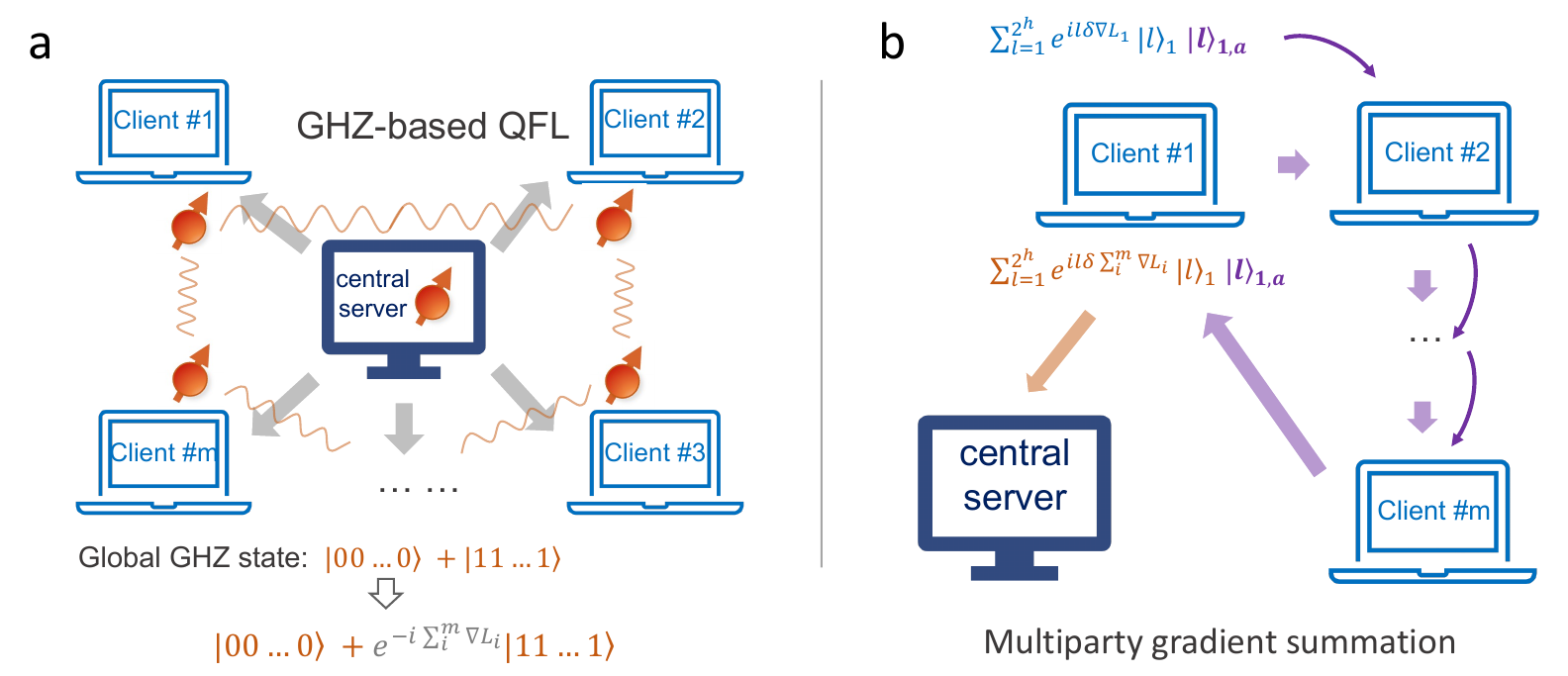}
\caption{Diagram of QFL protocols that are similar as incremental learning. \textbf{a.} Secure gradient aggregation based on global entanglement among clients. We consider GHZ states that are  distributed by the server or trusted client. After each client encodes its local gradient information, the server performs measurement on the phase of the state. \textbf{b.} Quantum federated learning with secure multiparty gradient summation. The ancillary $h$ qubits (in purple) are sent to the rest $(m-1)$ clients by the first client for gradient summation, after which the first client sends the other $h$-qubit state (in orange) to the server.  }
\label{fig:incremental_diagram}
\end{figure*}

The aforementioned protocols entail secure inner product estimation between the server and clients. 
An alternative approach involves the formulation of protocols with secure weighted gradient summation, ensuring that the server exclusively receives aggregated gradients rather than weighted gradients from individual clients. This concept aligns with the principles of incremental learning, wherein a model or parameter is iteratively trained or updated with new data, assimilating fresh information while preserving knowledge acquired from prior data. In this section, we introduce two such protocols: the first involves secure aggregation through a globally entangled state, while the second is grounded in secure multiparty gradient summation.

\subsection{Secure aggregation based on global entanglement}

We start by discussing a secure aggregation protocol using globally entangled state distributed among clients and server (Fig.~\ref{fig:incremental_diagram}a). Similar as Ref.~\cite{zhang2023federated}, at each time step and for each parameter to be updated, we consider a global ($m$+1)-qubit GHZ state and each qubit is held by one party. This can be achieved by letting the server prepare an GHZ state locally and then distribute the $m$-qubits to the $m$ clients via quantum channels. Alternatively, we can consider each party holds on local qubit and remote entanglement is generated via quantum photonic channels. 

Nevertheless, after the preparation of the GHZ state, for a given parameter, the $k$-th client encodes its local weighted gradient information $\nabla \mathcal{L}_k(\boldsymbol{\theta^t})$ into the phase of its local qubit by applying a phase gate. This process can be either sequentially or simultaneously.  The distributed qubits are then sent back to the server via the quantum channel. The resulting state now reads
\begin{equation}\label{eq:GHZ_global}
   \frac{1}{\sqrt{2}} ( \ket{00...0}_{s,1,2,...,m} + e^{-i \sum_k^m \nabla \mathcal{L}_k(\boldsymbol{\theta^t})}\ket{11...1}_{s,1,2,...,m} ).
\end{equation}
The server could firstly disentangle the $m$ received qubits by performing sequential CNOT gates between the local qubit $s$ with the rest $m$ qubits, leading to 
\begin{equation}
   \frac{1}{\sqrt{2}} ( \ket{0}_s + e^{-i \sum_k^m \nabla \mathcal{L}_k(\boldsymbol{\theta^t})}\ket{1}_s )\otimes \ket{0...0}_{1,2,...,m}.
\end{equation}
Then, similar as it's done in a typical Ramsey interferometry experiment~\cite{nielsen2010quantum}, the server would need to estimate the phase term $\sum_i^m \nabla \mathcal{L}_i(\boldsymbol{\theta^t})$ by applying a Hardmard gate on its local qubit followed by projective measurement in computational basis. The probability of getting zero would simply be
\begin{equation}
    P_k(0) = \frac{1+\cos (\sum_k^m \nabla \mathcal{L}_k(\boldsymbol{\theta^t}))}{2}. 
\end{equation}
The above process is repeated $\mathcal{O}(\frac{1}{\epsilon^2})$ times until the desired error bound $\epsilon$ is met. 
The procedure is iteratively applied to update all $d$ parameters.

We now perform security analysis of the gradient information. As all the local gradient is aggregated in the phase  of the GHZ state, the server could not extract the gradient of single clients. 
Under malicious server setting, in contrast of the semi-honest assumption in Ref.~\cite{zhang2023federated}, if the GHZ state is distributed by the server, the server could simply prepare the state where only the $j$-th client's qubit is entangled with the server qubit while others are not entangled, for example,
\begin{equation}
   \frac{1}{\sqrt{2}} (\ket{0}_1 + \ket{1}_1) \otimes ... \frac{1}{\sqrt{2}} (\ket{00}_{s,j} + \ket{11}_{s,j}) \otimes ... \frac{1}{\sqrt{2}} (\ket{0}_m + \ket{1}_m).
\end{equation}
Then, the malicious server would extract the gradient information of client $j$ by measuring the phase its local qubit. To tackle this adversary, the GHZ state could be distributed by a trusted client. 
Alternatively, the GHZ state can be generated by allowing communications among clients. That is, the honest (or honest-but-curious) clients could jointly prepares a $m$-qubit entangled state and then communicate with the server to reach the state Eq.~\ref{eq:GHZ_global}.

The total communication complexity for the aforementioned distributed entangled state scenario would be decided by the qubit distribution at each round and number of communication rounds to estimate the phase. Specifically, the total quantum communication cost would read
\begin{equation}
    \mathcal C_{GHZ} = \mathcal O(\frac{md}{\epsilon^2})
\end{equation}
with $\epsilon$ being the error bound for the phase estimation.  

%CL to-do: maybe add robust agianst malicious clients?

\subsection{Secure multiparty gradient summation}

We next introduce a gradient-hidden quantum federated learning protocol using phase accumulation and estimation. Inspired by the secure multiparty quantum summation protocol proposed in Ref.~\cite{Shi2016}, we consider the following data encoding method. At each time step $t$, the gradient information  for parameter $k$ and client $l$ is $\nabla_k \mathcal{L}_l(\boldsymbol{\theta^t}) \in \{0, \delta, 2\delta, ... , 2\pi \}$. Note that here we set the upper bound of each individual gradient to be $2\pi$ for simplicity and the condition can be relaxed. 

As shown in the diagram in Fig.~\ref{fig:incremental_diagram}b, the protocol starts from the first client, who encodes its local gradient information for a given parameter into a $h$-qubit state $\ket{\nabla \mathcal{L}_1}$ with $h = \lceil \log (2\pi/\delta) \rceil$ (we consider $\log (2\pi/\delta)$ being an integer for simplity in the following). A subsequent quantum Fourier transform (QFT) would yield the following state 
\begin{equation}\label{eq:incremental_psi1}
    \ket{\psi}_{1} = \text{QFT} \ket{\nabla \mathcal{L}_1} = \frac{1}{\sqrt{2^h}} \sum_{l=1}^{2^h} e^{i l\delta \nabla \mathcal{L}_1  } \ket{l}_{1}.
\end{equation} 

Then, the first client prepares a $h$-qubit ancillary state that encodes the same information as $\ket{l}_{1}$ above. This can be achieved by simply applying CNOT gates between the first $h$ qubits in Eq.~\ref{eq:incremental_psi1} and the ancillary $h$ qubits. The resulting state reads
\begin{equation}\label{eq:incremental_psi2}
    \ket{\psi}_{1,a} = \frac{1}{\sqrt{2^h}} \sum_{l=1}^{2^h} e^{i l\delta \nabla \mathcal{L}_1  } \ket{l}_{1} \ket{l}_{1,a}
\end{equation}
where the subscript $a$ denotes ancilla. 
The first client then sends the $h$-qubit ancillary state to the second client via quantum communication. Similarly as the first client, the second client would first encode its local gradient information in another $h$-qubit state $\ket{\nabla \mathcal{L}_2}$. In order to perform the summation of $\nabla \mathcal{L}_1$ and $\nabla \mathcal{L}_2$, we consider the following operation for the second client: conditioned on the state of the received ancilla qubits, phase gates are applied on the local $h$ qubit, such that the resulting ensemble state reads
\begin{equation}\label{eq:incremental_psi3}
\begin{split}
    \ket{\psi}_{12,a} &= C_{a}Z_{2} \frac{1}{\sqrt{2^h}} \sum_{l=1}^{2^h} e^{i l\delta \nabla \mathcal{L}_1  } \ket{l}_{1} \ket{l}_{1,a} \ket{\nabla \mathcal{L}_2} \\
    & = \frac{1}{\sqrt{2^h}} \sum_{l=1}^{2^h} e^{i l\delta (\nabla \mathcal{L}_1 +\nabla \mathcal{L}_2)  } \ket{l}_{1} \ket{l}_{1,a} \ket{\nabla \mathcal{L}_2}. \\
\end{split}
\end{equation}
Note that the local state $\ket{\nabla_k \mathcal{L}_2}$ is not entangled with the rest of the system after the above operation. 
The second client would then pass the received $h$ qubits to the next client and the above process repeats until all the $m$ clients encode their local gradient information in the phase:
\begin{equation}\label{eq:incremental_psi4}
\begin{split}
     \ket{\psi}_{123...m,a} =  \frac{1}{\sqrt{2^h}} \sum_{l=1}^{2^h} e^{i l\delta  \sum_i^m \nabla \mathcal{L}_i  } & \ket{l}_{1} \ket{l}_{1,a} \ket{\nabla \mathcal{L}_2} \\
     & ...\ket{\nabla \mathcal{L}_{m-1}} \ket{\nabla \mathcal{L}_m}. \\
\end{split}
\end{equation}
The $m$-th cient would return the ancillary qubits back to the first client, who will subsequently perform verification on the states to detect potential dishonesty of the involved parities. Specifically, the first client would first uncompute the ancillary qubits with CNOT gates, leading to 
\begin{equation}\label{eq:incremental_psi5}
\begin{split}
     \ket{\psi}_{123...m,a^{'}} =  \frac{1}{\sqrt{2^h}} \sum_{l=1}^{2^h} e^{i l\delta  \sum_i^m \nabla \mathcal{L}_i  } &\ket{l}_{1} \ket{0}_{1,a} \ket{\nabla \mathcal{L}_2} \\
     & ...\ket{\nabla \mathcal{L}_{m-1}} \ket{\nabla \mathcal{L}_m}. \\
\end{split}
\end{equation}
Then the ancillary qubits are measured. In the absence of malicious client that tries to extract the phase information of previous clients and perform projective measurements on the ancillary qubits in computational basis, the measurement should yield 0 for all the $h$ qubits. For example, if a malicious client applies inverse QFT on the ancillary qubits to extract the aggregated phase, the first client can detect this anomaly, given that the ancillary qubits cannot be reset to $\ket{0}_{1,a}$ in such a scenario.

% If the malicious client just performs projective measurement on ancilla without doing the inverse QFT, this anamoly might not be detected but the malicious client cannot extract useful gradient information either.

Upon the verification, the first client would send the $\ket{l}_1$ state to the server, who can then apply inverse QFT to extract the accumulated gradient information.
%\begin{widetext}
\begin{equation}\label{eq:incremental_psi6}
\begin{split}
    \ket{\psi}_s &= \text{QFT}^{-1} \frac{1}{\sqrt{2^h}} \sum_{l=1}^{2^h} e^{i l\delta  \sum_i^m \nabla \mathcal{L}_i  } \ket{l}_{1} \\
    & = \ket{   \sum_i^m \nabla \mathcal{L}_i  \mod 2\pi}_1   \\
\end{split}
\end{equation}
%\end{widetext}
With the state $\ket{   \sum_i^m \nabla_k \mathcal{L}_i  \mod 2\pi}_1$ outlined above, the server could perform model aggregation and update the model accordingly. The same protocol applies for other parameters to be updated and different time windows. 

We remark that as the state received by the server is $\frac{1}{\sqrt{2^h}} \sum_{l=1}^{2^h} e^{i l\delta  \sum_i^m \nabla_k \mathcal{L}_i  } \ket{l}_{1}$ and the gradient aggregation has already been performed incrementally, the server could not extract the local gradient information held 
by individual clients. Moreover, the verification procedure could ensure that a malicious client could not simply perform measurement on the phase of the received qubits to extract the previously aggregated gradient information. To this end, the protocol relies on one trusted client node that can prepare the entangled state in Eq.~\ref{eq:incremental_psi2} and send the final $h$-qubit state to the server. 
The efficiency of this incremental learning protocol might be improved by pre-assigning clients into multiple batches in which there are at least one trusted node. 
%\CL{to-do: consider add more discussions on different adversaries}

We now discuss the quantum communication cost of the designed protocol. As discussed above, the $h$-qubit states are transmitted among all the $m$ clients for each parameter to be updated. For $p$ iterations of the process that are needed to yield a standard error $\epsilon$ on the phase, the communication complexity  of this secure multiparty summation protocol is given by
\begin{equation}
    \mathcal C_{SMS} = (m+1)\times d \times h \times p =\mathcal O (\frac{md}{\epsilon} \log\frac{m}{\epsilon}) 
\end{equation}
where the $\log(\frac{m}{\epsilon})$ term is associated with the error that comes from assigning $\nabla_k \mathcal{L}(\boldsymbol{\theta^t}) \in \{0, \delta, 2\delta, ... , 2\pi \}$.

Alternatively, one may consider encoding all the $d$ gradient information in a superposition state with $\lceil \log d \rceil$ qubits to reduce communication cost in terms of number of parameters $d$. However, as the server would need to update the $d$ parameters separately, at least $d$ samplings are required to query the encoded gradient information hence the total communication complexity in $d$ would still be $\Tilde{\mathcal{O}}(d)$.

\section{Discussions}

%\subsection{Integration with quantum cryptography techniques}
We remark that the above protocols leveraging quantum communication can be integrated with common quantum cryptography techniques~\cite{QCryptoRevModPhys.74.145,DecoyPhysRevLett.94.230504,QKD_RevModPhys.92.025002,Sheng2017SB} to be secure against external attacks. As an example, we consider using decoy state~\cite{DecoyPhysRevLett.94.230504} 
to detect eavesdropping attacks: when a quantum state with $n$ data qubits is sent to another party via a quantum channel during the QFL protocols, 
decoy states are randomly inserted and sent along with data qubits.

More specifically, when a $n$-qubit state is transmitted, we consider $n_d = \mathcal O(n)$  decoy qubits that are randomly drawn from $\{ \ket{0},\ket{1},\ket{+},\ket{-}  \}$ by the sender. The receiver receives the data and decoy qubits from the quantum channel, as well as positions and encoding basis of decoy qubits from a separated classical channel. After measuring the decoy qubits in the instructed basis, the receiver transmits the measurement results to the sender, who will then calculate the error rate and detect the potential existence of external eavesdropper. In this simple case, for a given decoy state, the probability that the eavesdropper performs a measurement on it without being detected is simply $\frac{3}{4}$ and the probability drops exponentially when there are $n_d$ uncorrelated decoy qubits. Advanced decoy-state quantum key distribution techniques~\cite{QKD_RevModPhys.92.025002} can be implemented to enhance the protocol's resilience against third-party attacks.

In this scenario, while the protocol demonstrates the capability to detect eavesdropper attacks, it incurs an additional cost in communication complexity. Specifically, there is an extra classical and quantum communication cost of $\mathcal{O}(n_d)$ between each sender and receiver pair.

%\subsection{Comparison with classical methods}

It is important to emphasize that the proposed QFL protocols do not depend on a variational quantum circuit for gradient generation. Instead, gradient information can be produced using a classical neural network, thereby reducing the quantum capability demands on both the server and clients. Furthermore, in contrast to numerous classical federated learning algorithms that may face a trade-off between privacy loss and utility loss~\cite{zhang2022free}, the quantum protocols presented in this study do not compromise privacy for diminished utility, such as reduced accuracy.

\section{Conclusion}

In conclusion, we design gradient-hidden protocols for secure federated learning to protect against gradient inversion attacks and safeguard clients' local information. The proposed algorithms involve quantum communication among a server and clients, and we analyze both privacy and communication costs. The secure inner product estimation protocol based on BQBC relies on transmitting a logarithmic number of qubits to reduce the information server could query. We devise an efficient redundant encoding method to improve privacy further. For the incremental learning protocols, we consider both phase encoding based on globally entangled state and secure multi-party summation of gradient information to prevent the server from learning individual gradients from clients. We further discuss the quantum and classical communication costs involved in each protocol.

Our present study suggests numerous potential avenues for future research. Firstly, while the proposed protocols primarily address adversaries in the form of a malicious or honest-but-curious server, there is a need to develop secure protocols tailored to scenarios involving a dishonest majority, encompassing malicious clients~\cite{Dulek2020Dihonest_majority,Xia2021ByzantineAttack}. Secondly, the protocols proposed herein can be extended and applied to other secure distributed quantum computing tasks, such as quantum e-voting protocols~\cite{Arapinis_2021E-voting,Centrone_2022E-voting}. Furthermore, our work would motivate subsequent efforts aimed at achieving quantum communication advantages while preserving privacy advantages over classical counterparts in practical distributed machine learning tasks~\cite{PhysRevLett.117.100502,gilboa2023exponential}.
To this end, our work sheds light on designing efficient quantum communication-assisted distributed machine learning algorithms, studying quantum inherent privacy mechanisms, and paves the way for secure distributed quantum
computing protocols.

\acknowledgements
The authors thank Shaltiel Eloul, Jamie Heredge and other colleagues at the
Global Technology Applied Research Center of JPMorgan Chase for support and helpful discussions.
%\ZS{ This work was done when Z. S. was a research intern at JPMorgan Chase.}

\bibliography{draft_arXiv} 

\section*{Disclaimer}
This paper was prepared for informational purposes by the Global Technology Applied Research center of JPMorgan Chase \& Co. This paper is not a product of the Research Department of JPMorgan Chase \& Co. or its affiliates. Neither JPMorgan Chase \& Co. nor any of its affiliates makes any explicit or implied representation or warranty and none of them accept any liability in connection with this paper, including, without limitation, with respect to the completeness, accuracy, or reliability of the information contained herein and the potential legal, compliance, tax, or accounting effects thereof. This document is not intended as investment research or investment advice, or as a recommendation, offer, or solicitation for the purchase or sale of any security, financial instrument, financial product or service, or to be used in any way for evaluating the merits of participating in any transaction. Zhixin Song's contributions were made as part of his internship at Global Technology Applied Research in JPMorgan Chase.
\newpage
\clearpage

\appendix

\section{Gradient-inversion attack in federated learning} \label{app:gradient_inversion}

In this section, we expand upon the argument presented in Sec.~\ref{sec:data_leakage}, namely a major data leakage issue with the standard classical federated learning approach.
This vulnerability arises from the susceptibility to a gradient inversion attack, wherein an \emph{honest-but-curious} server can successfully reconstruct the original data from the received gradients~\cite{eloul2022enhancing, zhao2020idlg}. Here we analyze an example case by looking at the vulnerability of recovering averaged data information where the model considered for training is the fully connected neural network layer. To simplify the analysis, we consider a dense linear layer containing $\mathbf{x} = x_1 \cdots x_n$ as input and $y \in \mathcal{C}$ as output~\footnote{We represent the $y$ as $y_1\cdots y_{|\mathcal{C}|}$ such that $y = c_i$ implies $y_i = 1$ and rest being zero. This gives us $|\mathcal{C}|$ number of finite output classes. As an example, for $|\mathcal{C}| = 3$, the set $\mathcal{C} = \{a_1 : 100, a_2 : 010, a_3 : 001\}$.}, where the dense layer is $o_j = \sum_{i=1}^n w_{ij}x_i + b_j$. 
Given a known $y$, $\mathbf{x}$ can be inverted successfully, and any additional hidden layers can be inverted by back-propagation. 

Consider a typical classification architecture that uses softmax activation function, $p_k = \frac{e^{o_k}}{\sum_j e^{o_j}}$ to create the model label $y(\mathbf{w}, \mathbf{b}) = c_{\text{arg max}(p_k)}$ (here $k$ is the index for label class), followed by the cross entropy to obtain the cost value, 
\begin{equation}
    \mathcal L(p, y) = -\sum_{k}^{|\mathcal{C}|} y_k \log p_k
\end{equation}
where $|\mathcal{C}|$ is the number of output classes. The derivative of $p_k$ with respect to each $o_j$ is,
\begin{equation}
\frac{\partial p_k}{\partial o_j} = 
\begin{cases}
& p_k(1 - p_j), \hspace{2mm} k=j \\ 
 & -p_kp_j, \hspace{2mm} k\neq j 
\end{cases}
\end{equation}

Now we can calculate the derivative of the cost function with respect to the weights and biases via backpropagation,
%\begin{equation}
\begin{align}
    & \frac{\partial \mathcal L}{\partial w_{i, j=k}}
    %=\frac{\partial \mathcal L}{\partial p_k}\frac{\partial p_k}{\partial o_j}\frac{\partial o_j}{\partial w_{i,j}} 
    = (p_j - y_j)x_i, \\
    &\frac{\partial \mathcal L}{\partial b_{j=k}} =p_j - y_j. 
\end{align}
%\end{equation}
From this, it can be seen that the number of gradient equations shared with the server would be $n|\mathcal{C}| + |\mathcal{C}|$ while the number of unknowns is $n + |\mathcal{C}|$. For example, from the above equations, it can be seen see that $x_i$ can be found from any $j$, using,
\begin{equation}
   \frac{\partial \mathcal L}{\partial w_{i,j=k}}/\frac{\partial \mathcal L}{\partial b_{j=k}} = x_i 
\end{equation}
In the above setting we saw that for batch size $B=1$, the number of unknowns is less than the number of equations and thus the unknowns can be trivially recovered from the system of equations generated by the dense linear layer of the neural network. 

It turns out there is a feasible attack even if we consider the mini-batch size training with $B >1$. Here, the client trains with the inputs $\text{Samp} := [(\mathbf{x}_\kappa, y_\kappa)_{\kappa \in \text{Samp}}]$ and only shares the averaged gradient information (over the data points with $B = |\text{Samp}|$) with the server,
\begin{align}
        & \frac{\partial \mathcal L}{\partial w_{i, j=k}} = \frac{1}{B} \sum_{\kappa \in \text{Samp}}(p_{\kappa j} - y_{\kappa j})x_{\kappa i}, \\
        &\frac{\partial \mathcal L}{\partial b_{j=k}} = \frac{1}{B} \sum_{\kappa \in \text{Samp}}p_{\kappa j} - y_{\kappa j} 
\end{align}
In this scenario, the number of equations shared is still $n|\mathcal{C}| + |\mathcal{C}|$, whereas the number of unknowns is now $B(n + |\mathcal{C}|)$. Thus the number of unknowns can now exceed the number of equations, resulting in no unique solution for the server when attempting to solve the system of equations. Even in the case that a unique solution exists, numerical optimization can be challenging. 
However, 
In cases where the softmax follows the cross-entropy, the authors in ~\cite{eloul2022enhancing} demonstrate that an accurate direct solution can be achieved in many instances, even when $B \gg 1$. This is attributed to the demixing property across the batch, facilitating the server's ease in retrieving the data points in \text{Samp}.

\section{Protection mechanisms in classical federated learning}\label{app:classical_protection}

To safeguard clients' data from diverse adversaries, including gradient inversion attacks, classical federated learning protocols have implemented multiple protection mechanisms. In this context, we provide a concise review of these techniques.

Homomorphic Encryption (HE)~\cite{rivest1978data,aono2017privacy} stands out as a widely employed encryption technique for privacy protection, enabling the aggregation of local gradient information directly on encrypted data without the need for decryption. However, the practical application of HE faces challenges due to the substantial computational and communication overhead it introduces, particularly for large models~\cite{10.5555/3489146.3489179}.

Another prevalent technique involves the application of differential privacy~\cite{huang2021evaluating,Abadi2016DP,Dwork2006,dwork2006differential,truex2020ldpfed}, wherein Laplace noise or Gaussian noise is typically added to gradient information. While this method is straightforward and minimally impacts communication and computation costs, it may compromise privacy and result in diminished model utility.

Additionally, secret sharing techniques~\cite{Shamir1979,bonawitz2016practical} have been developed to distribute a secret among a group of participants, such as sharing local gradient information with the server using secret sharing among clients. Nevertheless, this approach necessitates extensive message exchange, incurring a communication overhead that may be impractical in many federated learning settings. A concrete example of such techniques is presented in Sec.~\ref{sec:baseline} of the main text.

\end{document}